\documentclass[aps,prl,twocolumn,epsf,epsfig,amsmath]{revtex4}
\usepackage{latexsym}
\usepackage{hyperref}
\usepackage{times}
\usepackage{graphicx}
\usepackage{subcaption}
\usepackage{amsmath}
\usepackage{multirow} 
\usepackage{amsthm}
\usepackage{dcolumn}% Align table columns on decimal point
\usepackage{bm}% bold math
\usepackage{ textcomp }
\usepackage{color}
\usepackage{amssymb}
\usepackage{xcolor}
\usepackage{easyReview}
\usepackage{mathdots}
\usepackage{float}
\usepackage{lineno}
\usepackage{todonotes}
\usepackage{array}
\usepackage{braket}
\usepackage{ragged2e}
\usepackage[justification=RaggedRight,singlelinecheck=false]{caption}

\newtheorem{theorem}{Theorem}

\newtheorem{lemma}[theorem]{Lemma}

\begin{document}
\title{Quantum Metric Bound State of Light}
\author{Jinchao Zhao}
\email{jinchao@ust.hk}
\author{Rongning Liu, Xue-Yang Song}
\author{K.T. Law}
\email{phlaw@ust.hk}
\affiliation{Department of Physics, Hong Kong University of Science and Technology, Clear Water Bay, Hong Kong, China}
\begin{abstract}
    The spatial confinement of defect-induced bound states is conventionally governed by the effective mass in dispersive bands. More recently, Compact Localized States (CLSs) arising from exact destructive interference have been utilized to achieve confinement in flat bands. However, CLSs rely on pristine lattice symmetries and fine-tuned defect profiles. The introduction of a generic local impurity inevitably breaks these strict phase-matching conditions, resulting in extensive bound states whose fundamental length scale has remained an open question. Here, we establish a third regime of confinement: the quantum metric bound state. We provide a rigorous mathematical proof demonstrating that in the absence of kinetic energy and CLS protection, the exponential decay length of these states is lower-bounded by the quantum metric of the unperturbed flat band. We demonstrate the tightness of this geometric limit by constructing a family of highly tunable flat-band generators, and we verify its universality across diverse realistic architectures. Ultimately, this classification establishes the independently measurable quantum metric as a predictive design principle for engineering confined modes in synthetic wave platforms.
\end{abstract}
\maketitle

\textit{Introduction}—
The engineering of flat-band systems stands as a transformative frontier in photonics \cite{xia2018unconventional,myoung2019flat,tang2020photonic,vicencio2021photonic,xia2023photonic,danieli2024flat}, offering a powerful platform to manipulate light-matter interactions through the complete quenching of kinetic energy. Unlike traditional dispersive bands, where transport is governed by group velocity, flat bands possess a vanishing dispersion across the entire Brillouin zone, theoretically resulting in an infinite effective mass. This extreme limit has been successfully realized across a diverse array of photonic settings, ranging from femtosecond-laser-written waveguide arrays \cite{vicencio2015observation,mukherjee2015observation} to exciton-polariton lattices \cite{baboux2016bosonic,whittaker2018exciton} and moiré superlattices \cite{wang2020localization,dong2021flat}, facilitating the exploration of slow-light phenomena, topological protection, and strongly correlated bosonic phases.

Historically, the understanding of wave localization in these architectures has been resolved into two primary regimes. In conventional dispersive bands, localization is purely energetic; the spatial extent of a defect-induced bound state is governed by the effective mass, yielding a decay length $\xi_b\sim1/\sqrt{m^*E_b/\hbar^2}$ (Fig.~\ref{fig:concept}a). Conversely, in flat bands, the divergence of the effective mass gives rise to a celebrated hallmark of flat-band physics: Compact Localized States (CLSs) \cite{maimaiti2017compact,leykam2018perspective,rhim2019classification}. These are macroscopically localized eigenmodes strictly confined to a finite number of unit cells due to exact destructive interference (Fig.~\ref{fig:concept}b). As an ideal platform for robust spatial confinement, CLSs have been widely studied for applications ranging from diffraction-free wave packet propagation\cite{xia2025fully} and high-efficiency image transmission \cite{xia2016demonstration} to low-threshold micro-lasing \cite{longhi2019photonic} and the enhancement of nonlinear optical effects \cite{tang2020photonic}. 

Crucially, however, CLSs describe highly specific, fine-tuned localized profiles or symmetric multi-site clusters engineered to maintain perfect phase cancellation \cite{vicencio2015observation,mukherjee2015observation}. When a generic, single-site local impurity is introduced, these phase-matching conditions are naturally broken \cite{danieli2024flat}, giving rise to extensive bound states that escape the CLS paradigm (Fig.~\ref{fig:concept}c). Defining the fundamental length scale that governs the bound state localization when kinetic energy is quenched, but CLS protection is absent, remains a critical open question. In this work, we show that the hidden length scale defined by quantum metric \cite{provost1980riemannian,anandan1990geometry,marzari2012maximally,rossi2021quantum,yu2025quantum,liu2025quantumold,verma2026quantum}, which termed the quantum metric length $l_{qm}$ \cite{chen2024ginzburg,hu2025anomalous,li2025flat,guo2025majorana,onishi2025quantum,lee2025embedding,kim2025real,chen2026quantum,chau2026quantum,dai2026quantum}, determines the exponential decay length of a generic bound state such that $\xi_b \ge 4/N^{2} l_{qm}$ (Fig.~\ref{fig:concept}c). When the quantum metric length is the dominating length scale governing the decay lengths of the bound states, we term these bound states the \textit{quantum metric bound states}.

\begin{figure*}[htbp]
    \centering
    \includegraphics[width=17.2cm]{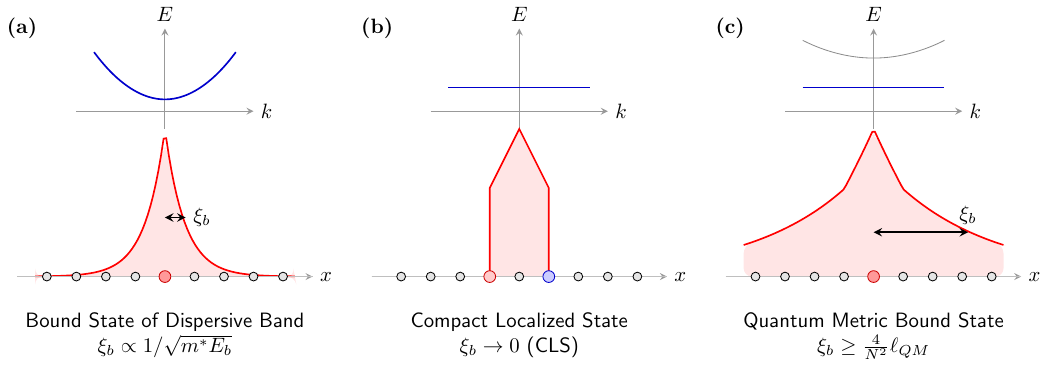}
    \caption{(a) In traditional dispersive bands, the spatial extent ($\xi_b$) of a defect-induced bound state is governed by the effective mass $m^*$ and the bound state energy $E_b$. (b) In an exact flat-band, the Compact Localized States are achieved given that the defect profiles are fine-tuned. (c) A generic local impurity breaks the phase-matching conditions for a CLS, and the bound state develops an extensive evanescent tail, bounded from below by the quantum metric length $\ell_{QM}$ of the unperturbed flat band. Here $N=(n-1)m$, where $n$ is the number of bands, and $m$ is the largest hopping range of electrons.}
    \label{fig:concept}
\end{figure*}

The hidden length scale, defined by the band wave functions and called the quantum metric length, was first introduced in the seminal work by Marzari et al. in Ref.\cite{marzari1997maximally}. It was shown that the quantum metric length governs the quadratic spread of the maximally localized Wannier functions\cite{marzari2012maximally}. While its broader physical consequences initially remained unexplored, recent developments have demonstrated that in flat band materials with vanishing Fermi velocity, the quantum metric length becomes the dominating length scale in many solid-state systems. For example, the quantum metric length determines the superconducting coherence length in flat band superconductors\cite{ chen2024ginzburg, hu2025anomalous, chen2026quantum}. It also governs the coherence length of flat band Josephson junctions\cite{li2025flat}, the spatial decay length of Majorana zero modes in topological superconductors\cite{guo2025majorana}, and the diffusion length as well as the localization length in disordered flat band systems\cite{chau2026quantum,dai2026quantum}. 

Despite all of the above works, a rigorous and model-independent theory linking the quantum metric tensor to the impurity-induced bound state remains elusive. In this Letter, we resolve this conceptual gap by establishing that the quantum metric as a universal geometric bound on wave localization. Specifically, we present a rigorous, model-independent proof that the exponential decay length ($\xi_b$) of a bound state induced by an arbitrary local impurity is fundamentally lower-bounded by the integrated quantum metric ($\ell_{QM}$) of the unperturbed flat band. By connecting this spatial decay directly to the analytic pole structure of the band projection operators, we utilize matrix Bernstein-type inequalities to establish the geometric tensor as a physical floor on confinement. To demonstrate the impact of this geometric limit beyond electronic solid-state systems, we proposed a realistic optical lattice architecture that allows for the precise tuning and direct observation of metric-controlled light confinement. Because this quantum metric bound state is generally an intrinsic wave phenomenon, our findings establish the quantum metric length as a predictive design principle for engineering confined modes in next-generation synthetic wave platforms.

\textit{Photonic Flat Band}—
To physically realize the flat-band dynamics established above, we consider the propagation of light within an array of evanescently coupled optical waveguides. Assuming a fixed polarization state and weak-guidance conditions, the electric field can be treated within the scalar paraxial approximation and is expressed as $E(x,y,z)=\Psi(x,y,z)e^{ik_0 n_0z}$, where $k_0$ is the vacuum wavenumber and $n_0$ is the background refractive index. Under the paraxial approximation, the slowly varying envelope $\Psi(x,y,z)$ follows the Schrödinger-like equation:
\begin{equation}\label{eq:phsch}
    i\partial_z\Psi(x,y,z)=\Big[-\frac{1}{2 k_0 n_0}\nabla^2_\bot -\frac{k_0}{n_0}\Delta n(x,y,z)\Big]\Psi(x,y,z),
\end{equation}
where the refractive index profile $\Delta n(x,y,z)=n(x,y,z)-n_0$ acts as an effective optical potential, and the propagation coordinate $z$ serves as an effective time axis. Looking for stationary solutions, we decompose the envelope as $\Psi(x,y,z) = \psi(x,y) e^{i\beta z}$, with $\beta$ being the propagation constant. Because $\Delta n(x,y,z)$ is uniform along $z$ and periodic in the transverse plane, Bloch’s theorem applies. The eigenmodes take the form $\psi_{n,\boldsymbol{k}}(x,y)=e^{i\boldsymbol{k}\cdot \boldsymbol{r}}u_{n,\boldsymbol{k}}(x,y)$ with $u_{n,\boldsymbol{k}}(x,y)$ having the same periodicity as the lattice. This yields a highly controllable band structure $\beta_n(\boldsymbol{k})$.

In experimental settings, these waveguides are evanescently coupled, allowing the continuous photonic lattice to be accurately captured by a discrete tight-binding model. The specific architecture considered here is a one-dimensional variant of the Lieb lattice \cite{chau2026quantum,mukherjee2015observation}. As illustrated in Fig.~\ref{fig:model}a, each unit cell consists of three distinct waveguides—labeled A, B, and C—with corresponding annihilation operators $a, b,$ and $c$. The Hamiltonian reads:
\begin{equation}
    H(x)=\sum_x J_+(a_x^\dagger b_x+a_x^\dagger c_x)+J_-(a_x^\dagger b_{x+1}+a_x^\dagger c_{x-1})+h.c.,
\end{equation}
where $J_\pm=J(1\pm \delta)$ denote the intra- and inter-cell hopping amplitudes. In our continuous model simulations, this hopping asymmetry $\delta$ is precisely tuned via the geometric angle $\angle \mathrm{B}_{x}\mathrm{A}_{x-1}\mathrm{A}_{x} \equiv \theta$ (Fig.~\ref{fig:model}a), while fixing $\angle\mathrm{B}_{x+1}\mathrm{A}_x\mathrm{C}_x$ at $90^\circ$. Consequently, the intra-cell (blue link) and inter-cell (green link) hopping distances are $a\sin\theta$ and $a\cos\theta$, respectively.

Adapting realistic experimental parameters \cite{mukherjee2015observation}, we set the lattice constant $a = 84\mu\mathrm{m}$, the waveguide diameter $d = 7.4\mu\mathrm{m}$, and the background index $n_0 = 1.5$. The waveguides feature a weak index contrast of $n = 1.5005$. We utilize the COMSOL Wave Optics Module to achieve the exact band structure (Fig.~\ref{fig:model}b). The spectrum exhibits a single, isolated nearly flat band (red) sandwiched between two dispersive bands (black). Crucially, the geometric tuning of $\theta$ directly controls the asymmetry $\delta$, which in turn opens a band gap of order $\sim \delta$. While residual next-nearest-neighbor couplings (between $\mathrm{A}_x\mathrm{A}_{x+1}$ and $\mathrm{C}_x\mathrm{B}_{x+1}$) introduce a slight dispersion, the flat band remains highly isolated, providing an ideal, high-quality platform to probe the geometric limits of light confinement. The quantum metric length $\ell_{QM}$ serves as the fundamental length scale in the neighbour of a flat band\cite{chen2024ginzburg,guo2025majorana,li2025flat,chau2026quantum}. It can be calculated directly from the continuous-wave solutions, avoiding explicit momentum derivatives. By extracting the real-space, cell-periodic Bloch functions $u_{f,k}(\boldsymbol{r})$ of the central band using the COMSOL eigensolver, the quantum metric length is defined via the distance between adjacent momentum states
\begin{equation}
    \ell_{QM} = \int_{-\pi/a}^{\pi/a} \frac{dk}{2\pi} \lim_{\Delta k \to 0} \frac{1 - |\langle u_k | u_{k+\Delta k} \rangle|^2}{\Delta k^2},
    \label{eq:qml}
\end{equation}
where the inner product $\langle u_k | u_{k+\Delta k} \rangle = \int_{\mathrm{U.C.}} u_k^*(\boldsymbol{r}) u_{k+\Delta k}(\boldsymbol{r}) d^2\boldsymbol{r}$ is integrated over a single unit cell.

\begin{figure}[h]
    \centering
        \includegraphics[width=\linewidth]{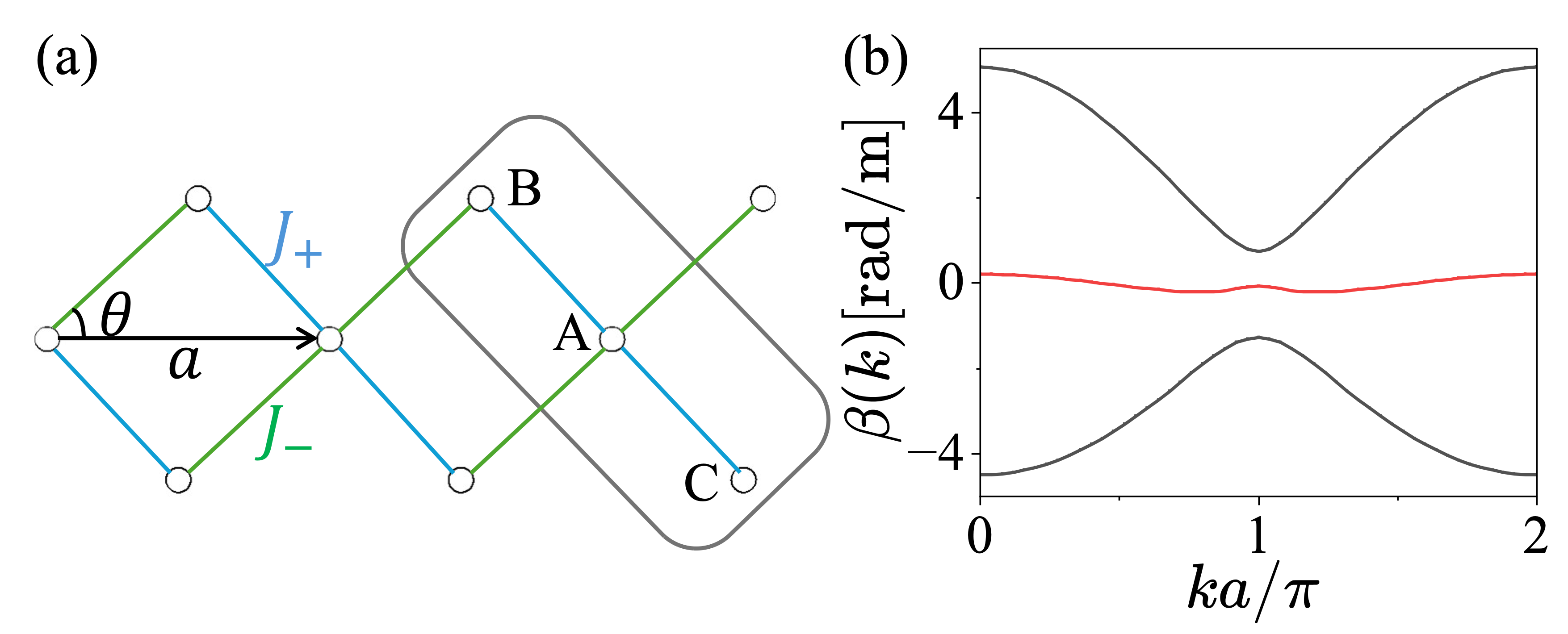}
    \caption{(a) Schematic of the one-dimensional photonic Lieb lattice. Each unit cell (gray box) consists of three waveguides, labeled A, B, and C. The lattice constant is $a = 84\mu\mathrm{m}$, and the waveguide diameter is $d = 7.4\mu\mathrm{m}$. The substrate refraction index is $n_0=1.5$, and the refraction index of waveguides is $n=1.5005$. The angle $\angle\mathrm{B}_{x+1}\mathrm{A}_x\mathrm{C}_x$ is fixed at $90^\circ$. The asymmetry between the coupling strengths $J_+$ and $J_-$ is controlled by the angle $\theta$.  
    (b) The band structure of the one-dimensional photonic Lieb lattice, calculated by COMSOL. We normalize the propagation constant $\beta(k)$ by subtracting the average value of the center band.}
    \label{fig:model}
\end{figure}

\textit{Bound States}—
Having established a controllable flat-band architecture, we now introduce a local symmetry-breaking perturbation to directly probe this third regime of localization. Crucially, we do not engineer a multi-site defect designed to support a CLS; instead, we introduce a generic, single-site local impurity to break the exact destructive interference conditions. By breaking translational invariance, this localized defect scatters the flat-band modes, pulling a bound state into the band gap. We investigate the asymptotic exponential profile of the resulting bound state via the ansatz
\begin{equation}
\psi_b(x-x_0)\sim p(x-x_0)e^{-|x-x_0|/\xi_b},
\label{eq:bound_state}
\end{equation}
where $x_0$ denotes the defect site, $p(x-x_0)$ is a polynomial prefactor dominating short-range oscillations, and $\xi_b$ is the characteristic long-range decay length.

In our continuous-wave simulations, this bound state is generated by slightly reducing the refractive index of a single B-site waveguide to $n_b$ (Fig.~\ref{fig:bound_state}a). To ensure the bound state minimizes hybridization with distant dispersive bands, the defect perturbation is kept perturbatively weak. The index contrast between the defect and normal waveguides is chosen to be significantly smaller than that between the waveguides and the background, such that $\Delta_b\equiv|n_b - n|/\Delta n \ll 1$.

By systematically varying the lattice geometry $\theta$, we extract the decay length $\xi_b$ from the asymptotic tails of the simulated electric field profiles (Fig.~\ref{fig:bound_state}b). Remarkably, despite the quenching of kinetic energy—which classically predicts an infinitely localized state ($\xi_b \to 0$) due to a diverging effective mass—the extracted decay lengths always remain finite. To understand this behavior, we compare the physical decay length $\xi_b$ against the quantum metric length $\ell_{QM}$ of the unperturbed central band. As $\theta$ approaches $45^\circ$, the extracted decay length $\xi_b$ increases monotonically, consistently tracking and staying above a geometric floor defined by the bulk quantum geometry (Fig.~\ref{fig:bound_state}c). This robust numerical correspondence motivates a fundamental question: why does the unperturbed band geometry control the minimal spatial extent of a defect state when kinetic energy is absent? To resolve this question, we must construct a model-independent theoretical framework that maps the analytic structure of the Bloch manifold to real-space localization.

\begin{figure}[h]
    \includegraphics[width=\linewidth]{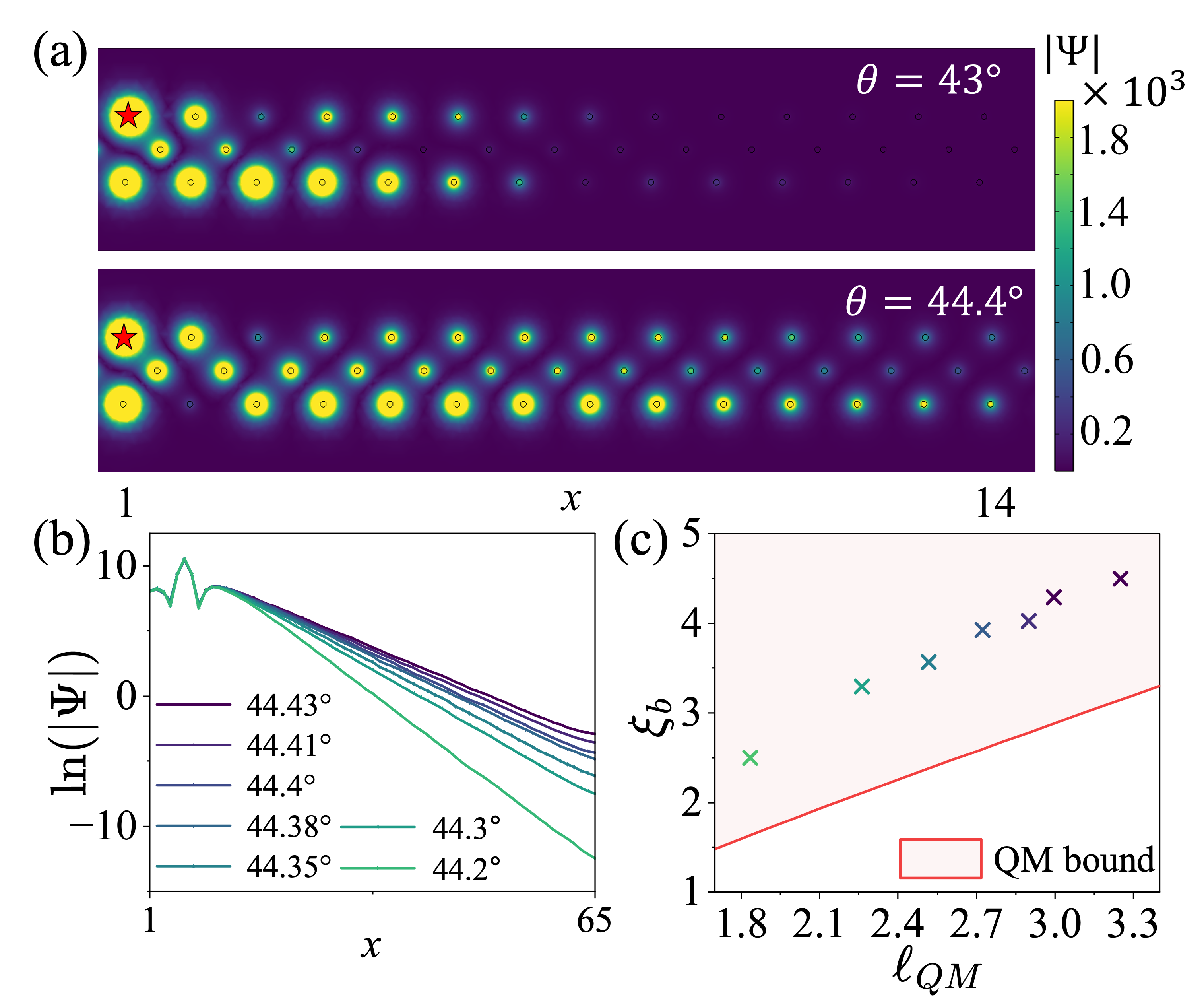}
    \caption{(a) Electric field intensity showing the gradual decay along the unit cells.
    The red star marks the waveguide with the altered refractive index.
    (b) The bound state configurations at the centers of the B sites
    of different values of $\theta$. The short-range oscillations arise from the polynomial prefactor as shown in Eq.~\ref{eq:bound_state}.
    (c) Comparison between $\xi_b$ and $\ell_{QM}$ for different $\theta$ using the same legend color in panel b. The shaded area is defined by Eq.~\ref{eq:qmineq} with $N=2$. All the simulations are performed under the weak impurity limit by setting $\Delta n_b / \Delta n = 0.05\%$,  
    }
    \label{fig:bound_state}
\end{figure}

\textit{Theoretical Framework and Geometric Bounds}—
To establish the universal mechanism governing this quantum-metric-controlled confinement, we examine the analytic structure of the defect-induced bound states. In a generic multi-band lattice, introducing a local impurity scatters the flat-band modes to pull a discrete bound state into the gap. Following the Lifshitz-Koster-Slater formalism, the real-space profile of this state is determined by the unperturbed bulk Green's function(See End Matter)
\begin{equation}
    G^f(R;E)= \frac{1}{V_k}\int dk e^{ik\cdot R}\frac{P_f(k)}{E-E_{f}}
\end{equation}
where $P_f(k) = \ket{\psi_{f,k}}\bra{\psi_{f,k}}$ is the flat-band projection operator. Because the flat-band energy $E_f$ is independent of momentum, the kinetic energy denominator factors out of the integral completely. Consequently, the spatial structure of $G^f(R;E)$ is independent of traditional effective mass or dispersion. It depends entirely on the Hilbert space geometry information encoded in $P_f(k)$.

Analytically continuing the momentum into the complex plane ($k = x + iy$), the real-space spatial decay for a unit cell separation $R>0$ is evaluated via contour integration. The asymptotic decay length is governed by the residue of the complex pole closest to the real axis, scaling as $\xi_b \sim 1/y_{\mathrm{min}}$, where $y_{\mathrm{min}}$ is the distance of the nearest singularity from the real momentum axis. The physical limit of light confinement thus maps onto an algebraic question: what bounds the proximity of these complex poles to the real axis?

We resolve this by linking the pole locations to the momentum-space variations of the Bloch states. The bulk quantum metric, defined as $\mathcal{G}(k)=\frac{1}{2}\mathrm{Tr}{[(\partial_k P_f)^2]}$, measures the gauge-invariant distance between adjacent eigenstates in Hilbert space. For a finite-range tight-binding system with maximum hopping range $m$ and $n$ bands, the elements of $P_f(k)$ evaluate to rational trigonometric functions of maximum algebraic degree $N=(n-1)m$ (Lemma~\ref{lemma:projection_poles} in End Matter). According to Bernstein-type approximation theory, a rational function matrix with a real-space cutoff cannot exhibit arbitrary rapid phase winding along the real axis; its derivative is capped by the location of its complex poles\cite{borwein2012polynomials}. By integrating this derivative bound across the Brillouin zone, we find that the integrated quantum metric length $\ell_{QM}=\int \frac{dk}{2\pi}\mathcal{G}(k)$ is rigorously constrained by $y_{\mathrm{min}}$ through 
\begin{equation}
    \ell_{QM}\le\frac{N^2}{4}\coth{y_{\mathrm{min}}},
\end{equation}
See Theorem~\ref{theorem:qml_bound} in End Matter.
Crucially, in the trivial limits of an atomic chain ($m=0$) or a single isolated band ($n=1$), where the algebraic degree $N=(n-1)m$ vanishes, the inequality reduces consistently to $0 \le 0$, since in both cases the quantum geometry also naturally vanishes ($\ell_{QM}=0$). 
In the weak impurity limit where the bound state is energetically close to the flat band, $\xi_b \sim 1/y_{\mathrm{min}}$. Expanding this inequality yields a universal geometric lower bound on wave confinement
\begin{equation}
    \xi_b \gtrsim \left(\mathrm{Arccoth}\left(\frac{4}{N^2}\ell_{QM}\right)\right)^{-1} \stackrel{\ell_{QM}\rightarrow\infty}{\Longrightarrow} \frac{4}{N^2}\ell_{QM}.
\label{eq:qmineq}
\end{equation}
This inequality proves that the spatial extension of a flat-band mode cannot be suppressed by arbitrarily deep defect potentials. Instead, the absolute minimum spread of the evanescent tail is determined by the unperturbed bulk quantum metric length $\ell_{QM}$. For the photonic Lieb lattice evaluated in the previous example ($n=3, m=1$), the degree evaluates to $N=2$. As plotted in Fig.~\ref{fig:bound_state}c, the geometric floor is respected by our continuous-wave simulations. Rather than acting as an abstract theoretical descriptor, the quantum metric functions as a predictive, independently measurable design limit for wave localization in synthetic structures.

To demonstrate that this quantum geometric limit is strict, meaning the inequality is tight and physically saturable, we construct an exact tight-binding model. By engineering the flat-band projection operator from a degree-$N$ Blaschke product, we force its complex poles to lie precisely at the extreme boundary permitted by the Bernstein-type derivative bound\cite{borwein2012polynomials}. For the minimal case of $N=1$, we derive a two-band ($n=2$), nearest-neighbor ($m=1$) tight-binding Hamiltonian(see End Matter for details):
\begin{equation}
H_{1}(k)=C\left(\begin{array}{cc}
\cos k-\cosh y_{0}&1-\cos(k-iy_{0})\\
1-\cos(k+iy_{0})&\cos k-\cosh y_{0}
\end{array}\right),
\end{equation}
where $C$ defines the energy scale and $y_0$ parameterizes the isolating band gap. By construction, the distance of these complex poles to the real axis perfectly saturates the derivative bound, causing the physical bound-state decay length to exactly equal the geometric quantum metric limit.

Beyond serving as a rigid mathematical validation, this approach introduces a powerful paradigm for flat-band lattice generation. Traditional flat-band design strategies predominantly rely on real-space CLSs\cite{maimaiti2017compact,maimaiti2021flat,chen2023decoding,liu2025designing}, a methodology that inherently obscures control over momentum-space quantum geometry. Conversely, brute-force spectral flattening via algebraic division inevitably introduces unphysical, infinite-range hoppings \cite{neupert2011fractional,parameswaran2013fractional}. In contrast, our Blaschke-product generator yields tight-binding models that are short-ranged, analytically tractable, and highly tunable via the parameter $y_0$. These Hamiltonians are geometrically optimal, maximizing the bulk quantum metric mathematically permissible for a given hopping range and bound-state decay profile.

The physical significance of this geometric bound state is manifested by its direct experimental accessibility in state-of-the-art photonic platforms, such as femtosecond-laser-written waveguide arrays \cite{vicencio2015observation,mukherjee2015observation}. Within these architectures, a generic local impurity can be easily engineered by modulating the laser writing speed or power for a single waveguide, precisely tuning its local on-site potential (refractive index). The exponential decay length $\xi_b$ of the resulting evanescent mode can be directly mapped via near-field scanning optical microscopy or direct output facet imaging. Concurrently, the unperturbed quantum metric $\ell_{QM}$ can be independently extracted in the pristine lattice using established momentum-space interferometry \cite{asteria2019measuring} or by tracking the anomalous transverse drift of wave packets \cite{gianfrate2020measurement}. Because both sides of the inequality $\xi_b \ge \frac{4}{N^2}\ell_{QM}$ are independently measurable, the bound serves as a concrete, testable law linking abstract Hilbert-space topology to real-space spatial dynamics.

In realistic experimental settings, weak residual long-range couplings inevitably introduce a small but finite bandwidth $W_f$, rendering the flat band quasi-flat. Although the algebraic derivation of the projector degree $N=(n-1)m$ assumes an exact dispersionless band, the physical relevance of the bound remains robust through the continuity of the resolvent operator. A realistic system can be expressed analytically as $H_{\text{exp}}(k) = H_{\text{ideal}}(k) + \delta H(k)$, where $\delta H(k)$ accounts for the finite-range residual couplings. The spatial decay of the experimental bound state is governed by the complex roots of the characteristic polynomial $\det(E_b I - H_{\text{ideal}}(k) - \delta H(k)) = 0$. Because the roots of a polynomial depend continuously on its coefficients, a weak perturbation ($W_f \ll \Delta$, where $\Delta$ is the isolating band gap) only smoothly perturbs the complex singularities of the unperturbed system. Consequently, the physical decay length $\xi_{\text{exp}}$ acquires only a finite, perturbative correction of order $\mathcal{O}(W_f/\Delta)$, ensuring that the inequality $\xi \gtrsim \frac{4}{N^2}\ell_{QM}$ acts as an asymptotic floor even in non-ideal experimental environments.

\textit{Discussion}—
The establishment of this geometric limit provides a comprehensive taxonomy for wave localization in flat-band architectures, resolving it into three distinct regimes. While traditional dispersive bands respect an energetic localization scaling as $\xi \propto 1/\sqrt{m^* E_b}$ (Fig.~\ref{fig:concept}a), and highly symmetric lattices support truncated Compact Localized States (CLSs) via fine-tuned destructive interference (Fig.~\ref{fig:concept}b), a generic local impurity inevitably breaks this phase cancellation \cite{danieli2024flat}. In this third regime—where kinetic energy is quenched, but CLS protection is lost—the spatial extent of the evanescent tail is governed by the analytic pole structure of the band projection operator, enforcing a geometric floor $\xi \gtrsim \frac{4}{N^2}\ell_{QM}$ (Fig.~\ref{fig:concept}c). Because our proof relies on the algebraic properties of the projection operators, this geometric bound is universal. It holds seamlessly in higher dimensions: the directional decay length is constrained by the integrated trace of the metric tensor ($\xi_\mu \gtrsim \frac{4}{N^2}\ell_{QM,\mu}$) along the $\mu$ direction. It also applies broadly to any physical system governed by dispersionless wave physics, including classical acoustic \cite{han2025all} and mechanical metamaterials \cite{craster2023mechanical}, as well as quantum plasmonic arrays \cite{xu2022topological} and exciton-polariton condensates \cite{byrnes2014exciton}. This metric-mandated spatial leakiness is not an unwanted flaw, but a robust resource: it guarantees a minimum modal overlap that can be directly harnessed to optimize dipole-dipole interactions between embedded emitters \cite{sun2024ultra}, minimize mode volumes in flat-band micro-lasers \cite{longhi2019photonic}, and enhance the interaction cross-section for optical sensing applications \cite{wang2026plasmonic}.

Crucially, this geometric floor bridges individual local confinement properties to macroscopic boundary phenomena. In a finite-sized flat-band lattice, topological edge modes or boundary states localized at opposite ends of the crystal develop extensive evanescent tails into the bulk that cannot be compressed below the metric-defined limit. This guaranteed spatial penetration enables remote boundary states to hybridize across the bulk over macroscopically long distances, completely bypassing the constraints of a vanishing group velocity. This phenomenon directly parallels recent discoveries in topological quantum platforms, where the quantum metric length governs the long-range hybridization of Majorana zero modes and mediates ultra-long-range non-local crossed Andreev reflections \cite{guo2025majorana}. In synthetic wave platforms, translating this mechanism offers a robust design principle for engineering coherent, non-local optical switches and directional couplers whose cross-talk is protected by the bulk geometry.

Finally, the tight-binding models engineered to saturate this geometric bound provide a powerful framework for future research. Because these models maximize the quantum metric while maintaining short-range hoppings, they are ideal for quantum metric engineering. Their structural simplicity and high tunability facilitate experimental realization across synthetic macroscopic platforms. Furthermore, this optimal, short-range nature makes these Hamiltonians well-suited to serve as the non-interacting foundation for exploring how the quantum metric determines localization \cite{dai2026quantum} and anomalous transport in strongly correlated, many-body interacting regimes. In all, these findings establish the quantum metric as a universal, predictive design principle for wave confinement, bridging the abstract geometric information of Hilbert space with the tangible limits of real-space spatial dynamics.

\textit{Acknowledgements}—
   The authors appreciate helpful discussions with Yixin Xiao, Tianyue Li, and Changhao Meng, and especially thank C.T. Chan for fruitful discussions. J. Z. and K. T. L. acknowledge the support of the Ministry of Science and Technology, China, The New Cornerstone Foundation, and the Hong Kong Research Grants Council through Grants No. MOST23SC01-A, No. RFS2021-6S03, No. C6053-23G, No. AoE/P-701/20, AoE/P-604/25R, No. 16309223, No. 16311424 and No. 16300325. R. L. and X.-Y. S. acknowledge the support of the Early Career Scheme of the Hong Kong Research Grants Council with grant No. 26309524.

\bibliography{new}

\section{end matter}

\textit{Bound state in multi-band systems}—
In the following, we assume the system has multiple bands, and a weak local impurity is introduced. 
The Green function is given by
\begin{equation}
\label{eq:green}
    G_{0}(R;E)= \frac{1}{V_k}\sum_n\int dk e^{ik\cdot R}\frac{P_n(k)}{E-E_{nk}},
\end{equation}
where $P_n(k)=\ket{\psi_{nk}}\bra{\psi_{nk}}$ is the projection operator to the $n$th band.
Following the Lifshitz-Koster-Slater method, for a local impurity $V=\sum_{\alpha\beta}V_{\alpha\beta}\ket{0\alpha}\bra{0\beta}$, the bound state wave function is given by
\begin{equation}
    \psi_b^\alpha(r)\equiv \braket{r\alpha|\psi_b}=\sum_{\beta\gamma}G_{0\alpha\beta}(r;E_b)V_{\beta\gamma}{\psi_b^\gamma(0)}.
\end{equation}
As the linear combination, the wave function follows the same decay behavior as the real-space Green function. 
According to Eq.~\ref{eq:green}, the Green function has contributions from all of its band components,
\begin{equation}
    G_0^n(R,E)=\frac{1}{V_k}\int dk e^{ik\cdot R}\frac{P_n(k)}{E-E_{nk}}.
\end{equation}

For the system with an isolated flat band, the Green function has a contribution 
\begin{equation}
    G_{0}^f(R;E)=\frac{1}{V_k}\frac{1}{E-E_f}\int dk e^{ik\cdot R}P_f(k)
\end{equation}
As $E-E_f$ has no $k$ dependence, the real space decay behavior of $G^f$ is determined by the pole structure of $P_f(k)$ on the complex plane of $k$. Due to the $2\pi$ periodicity $H(k)=H(k+2\pi)$ on the real axis, the real space Green function for $R>0$ reduces to the residual contribution from the poles $\left\{k_{f,\ell}^+\right\}_{\ell=1}^{p_f}\subset\mathbb{C}^+$ in the upper half complex plane of the flat band projection operator,
\begin{equation}
\begin{split}
    G_{0}^f(R>0;E)&=\frac{1}{V_k}\frac{1}{E-E_f}\sum_{\ell=1}^{p_f}\mathrm{Res}_{k=k^+_{f,\ell}}\left[ e^{ikR}P_f(k)\right]\\
    &\propto \sum_{\ell=1}^{p_f} e^{ik_{f,\ell}^+R}\;\sim\; e^{-y_{\mathrm{min}}R},
\end{split}
\end{equation}
where $y_{\mathrm{min}}$ is the imaginary part of the pole $k_{f,\ell}^+=x_{f,\ell}+iy_{f,\ell}$ that is closest to the real axis.

In order to get an estimate on the real space decay of the flat band projection operator, we adopt the Bernstein-Type Inequality:
\begin{lemma}
\label{lemma:Bernstein}
    Given $(a_\ell)_{\ell=1}^{2N}=(x_\ell+iy_\ell)_{\ell=1}^{2N}\subset\mathbb{C}\setminus\mathbb{R}$ and $K=\mathbb{R}(\mathrm{mod}\,2\pi)$, define
    \begin{equation}
        \mathcal{T}_N^c(a_1,\cdots,a_{2N},K)=\left\{t(\theta)/\prod_{\ell=1}^{2N}|\sin((\theta-a_\ell\\)/2)|:t\in\mathcal{T}_N^c\right\},
    \end{equation}
    and
    \begin{equation}
        \mathcal{T}_N^c=\left\{t:t(\theta)=\sum_{\ell=-N}^{N}c_{\ell}e^{i\ell\theta},c_\ell\in\mathbb{C}\right\}.
    \end{equation}
    Define the Bernstein factor
    \begin{equation}
        B_N(\theta)=\max\left\{-\sum_{\substack{
             \ell=1 \\
             y_\ell<0
        }}^{2N}P_{y_\ell}(\theta-x_\ell),\sum_{\substack{
             \ell=1 \\
             y_\ell>0
        }}^{2N}P_{y_\ell}(\theta-x_\ell)\right\},
    \end{equation}
    where $P_y(x)={\sinh{y}}/\left({\cosh{y}-\cos{x}}\right)$ is the Poisson kernel.
    Then 
    \begin{equation}
        |f'(\theta)|\le B_N(\theta)||f||_{K}, \quad \forall\theta\in K,
    \end{equation}
    for every $f\in\mathcal{T}_n^c(a_1,\cdots,a_{2N},K)$.
\end{lemma}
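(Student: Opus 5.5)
This is the Borwein--Erdélyi Bernstein-type inequality for rational trigonometric functions with prescribed poles. Our plan is to prove it in two stages. First we treat the case where all poles lie in one half-plane, using a Blaschke-product representation. Then we obtain the general case by splitting the poles into the two half-planes and using a maximum argument.

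\emph{Reduction to a Blaschke product.} Put $z=e^{i\theta}$ and $\alpha_\ell=e^{ia_\ell}$. Since $y_\ell\neq0$, each $\alpha_\ell$ lies off the unit circle. One checks that $|\sin((\theta-a_\ell)/2)|=\tfrac12|z-\alpha_\ell|\,|\alpha_\ell|^{-1/2}$. Hence $f=t/\prod_\ell|\sin((\theta-a_\ell)/2)|$ can be written, up to a unimodular factor $z^{-N}$ and a positive constant, as $|f|=|Q(z)|/\prod_\ell|z-\alpha_\ell|$ with $Q$ a polynomial of degree $\le 2N$.

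Split the poles into $\mathcal{P}_+$ (with $y_\ell>0$, so $|\alpha_\ell|<1$) and $\mathcal{P}_-$ (with $y_\ell<0$, so $|\alpha_\ell|>1$). Define the finite Blaschke products
\[
\mathcal{B}_+(z)=\prod_{\ell\in\mathcal{P}_+}\frac{z-\alpha_\ell}{1-\bar\alpha_\ell z},\qquad
\mathcal{B}_-(z)=\prod_{\ell\in\mathcal{P}_-}\frac{1-\bar\alpha_\ell z}{z-\alpha_\ell}.
\]
Both are unimodular on $|z|=1$. A direct computation gives
\[
\frac{d}{d\theta}\arg\mathcal{B}_+(e^{i\theta})=\sum_{\ell\in\mathcal{P}_+}\frac{1-|\alpha_\ell|^2}{|e^{i\theta}-\alpha_\ell|^2}=\sum_{y_\ell>0}P_{y_\ell}(\theta-x_\ell),
\]
where we use $|\alpha_\ell|=e^{-y_\ell}$. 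The analogous identity for $\mathcal{B}_-$ gives $-\sum_{y_\ell<0}P_{y_\ell}(\theta-x_\ell)$. These two phase velocities are exactly the two entries inside the $\max$ defining $B_N(\theta)$.

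\emph{Key step: a Bernstein inequality for the model space.} Fix $\theta_0$ and normalize so that $\|f\|_K=1$. We write $f$ on the circle as the boundary value of a function $g=\phi\,\mathcal{B}_+\cdot\overline{\psi\,\mathcal{B}_-}$-type object, where $\phi$ and $\psi$ are analytic in the disk and have the same sup norm as $f$. Equivalently, $f\cdot\prod_{\mathcal{P}_+}(1-\bar\alpha_\ell z)/\dots$ is a rational function with all its poles outside the closed disk.

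For the one-sided case ($\mathcal{P}_-=\emptyset$), $F(z)=f\,\mathcal{B}_+$ up to a unimodular factor $z^{N}$. Here we use the classical argument of Borwein--Erdélyi. Suppose $|f'(\theta_0)|>B_N(\theta_0)$. Consider $h=\mathcal{B}_+-\lambda F$ with $|\lambda|<1$. Rouché's theorem shows $h$ has the same number of zeros in the disk as $\mathcal{B}_+$. On the other hand, choosing $\lambda$ so that $h$ has a zero of order two at $e^{i\theta_0}$ (or a sign change of the phase) produces, by the argument principle, too many zeros. This is a contradiction, because the winding number of $\mathcal{B}_+$ equals $\int\sum P_{y_\ell}\,d\theta/2\pi$. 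In particular, the local phase speed $\sum_{y_\ell>0}P_{y_\ell}$ controls the real-valued derivative.

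To handle $|f'|$ rather than $(\mathrm{Re}\,f)'$, we first apply the argument to $\mathrm{Re}(e^{i\gamma}f)$ and optimize over $\gamma$. For real-valued trigonometric rational functions, the Lukács-type sum $\mathcal{B}_+\bar{}$ representation gives $|f'|^2+(\dots)\le B^2$.

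\emph{General case and main obstacle.} When poles lie in both half-planes, we use conjugation symmetry. Reflecting $\alpha\mapsto1/\bar\alpha$ leaves $|z-\alpha|$ unchanged on the circle up to a constant. This lets us move all poles into $\mathcal{P}_+$ or all into $\mathcal{P}_-$, and the modulus of $f$ is unchanged. However, the phase factor, and hence $f'$, changes. We therefore apply the one-sided bound twice, once with the full Blaschke product built inside and once built outside. We take the better of the two, which yields the $\max$; the inequality with the larger factor dominates.

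The main obstacle is making this reflection rigorous for complex-valued $t$, because $|f|$ is preserved but $f$ itself is not. For this step we would rely on the exact statement in Borwein--Erdélyi (Theorem 7.1.7 of \cite{borwein2012polynomials}) and verify only the translation of the weights $(1-|\alpha|^2)/|z-\alpha|^2$ into the Poisson kernels $P_{y}(\theta-x)$.
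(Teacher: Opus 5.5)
Relying on Borwein--Erd\'elyi is exactly what the paper does: its proof of this lemma is only a citation of their Corollary~7.1.8, the version on $K$. Your translation $(1-|\alpha_\ell|^2)/|e^{i\theta}-\alpha_\ell|^2=P_{y_\ell}(\theta-x_\ell)$ is also correct. As an argument, though, your sketch has a genuine gap exactly where the $\max$ has to come from.

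Let $w_+=\sum_{y_\ell>0}P_{y_\ell}(\theta-x_\ell)$ and $w_-=-\sum_{y_\ell<0}P_{y_\ell}(\theta-x_\ell)$ be the phase velocities of $\mathcal{B}_\pm$.
\begin{itemize}
\item Reflecting all poles to one side gives a configuration whose one-sided weight is $w_++w_-$, the \emph{sum}, whichever side you choose.
\item ``Taking the better of the two'' would give $\min(w_+,w_-)$, and that is false in general. Take the rational reading $f=Q(z)/\prod_\ell(z-\alpha_\ell)$. Then $\overline{\mathcal{B}_+}$ and $\mathcal{B}_-$ both lie in the space, have sup norm $1$, and satisfy $|f'|=w_+$ and $|f'|=w_-$ respectively.
\item The real-part reduction has the same defect. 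For unpaired poles, $\mathrm{Re}(e^{i\gamma}f)$ has poles at all $a_\ell$ and $\bar a_\ell$, so the real inequality only returns $w_++w_-$.
\item Your Rouch\'e step cannot work as written. With $|\lambda|<1$ and $\|F\|_K\le1$, the function $h=\mathcal{B}_+-\lambda F$ has no zeros on $|z|=1$ at all, so the boundary double zero you want cannot occur.
\end{itemize}

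The missing idea is a half-phase twist.
\begin{itemize}
\item In the rational reading, $F=f\mathcal{B}_+$ lies in the $(2N+1)$-dimensional space of functions $q(z)/\prod_b(1-\bar b z)$ with $\deg q\le 2N$, where $b$ runs over the zeros of $\Theta=\mathcal{B}_+\mathcal{B}_-$.
\item Write $\mathcal{B}_\pm=e^{i\phi_\pm}$ on the circle and $\phi=\phi_++\phi_-$, so $\phi'=w_++w_-$. Set $G=Fe^{-i\phi/2}$.
\item The map $F\mapsto\Theta\bar F$ preserves that space. Hence each $\mathrm{Re}(e^{i\gamma}G)$ has the form $He^{-i\phi/2}$ with $H$ in the space, and so has at most $2N$ zeros per period.
\item The function $\cos(\phi/2-c)=\tfrac12e^{-i\phi/2}(e^{ic}+e^{-ic}\Theta)$ belongs to the same family and alternates $2N$ times. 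The classical Chebyshev comparison then gives $|G_\theta|\le\tfrac12\phi'\|f\|_K$.
\item Finally,
\[
f_\theta=e^{i(\phi_--\phi_+)/2}\Bigl[G_\theta+\tfrac{i}{2}(w_--w_+)\,G\Bigr],
\]
so $|f'|\le\tfrac12(w_++w_-)\|f\|_K+\tfrac12|w_+-w_-|\,\|f\|_K=B_N\|f\|_K$.
\end{itemize}

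Your stated ``main obstacle'' comes from mixing two classes. With the absolute values as literally written, $|\sin((\theta-a)/2)|=|\sin((\theta-\bar a)/2)|$ for real $\theta$. So reflecting a pole leaves $f$ itself unchanged, not just $|f|$. The cost is that $f\mathcal{B}_+$ is then in general not analytic. For that literal class, $\mathrm{Re}(e^{i\gamma}f)$ stays in the real class with the same denominator. The same Chebyshev argument, with extremal $\cos(\psi/2-c)$ and $\psi'=\sum_\ell P_{|y_\ell|}(\theta-x_\ell)$, directly gives the stronger bound $\tfrac12\sum_\ell P_{|y_\ell|}(\theta-x_\ell)\le B_N(\theta)$. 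In the paper's application (Lemma~\ref{lemma:projection_poles}) the poles come in conjugate pairs, so the two readings coincide and $B_N=w_+=w_-$.
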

\begin{proof}
    See \cite{borwein2012polynomials} Corollary 7.1.8 (Bernstein-Type Inequality on $K$, Complex Case), pp327.
\end{proof}

\begin{lemma}
\label{lemma:projection_poles}
    For a tight-binding system with $n$ orbitals in each unit cell with maximal hopping range $m$, if this system has one isolated flat band with a finite band gap, the matrix elements of the flat band projection operator $P(k)=|v_0\rangle\langle v_0|$ are rational triagonal functions with order of numerator and denominator $N= (n-1)m$.
\end{lemma}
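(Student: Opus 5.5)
The plan is to express the flat-band projector through the adjugate of $E_f-H(k)$ and then remove the flat factor from the characteristic polynomial. The Bloch Hamiltonian $H(k)$ is an $n\times n$ Hermitian matrix. Each entry is a trigonometric polynomial of degree at most $m$, i.e.\ a linear combination of $e^{i\ell k}$ with $|\ell|\le m$. Write the characteristic polynomial as $\chi_k(E)=\det(E-H(k))=(E-E_f)\,q_k(E)$, with $q_k(E)=\prod_{j\neq f}(E-E_j(k))$. Since the flat band is isolated by a finite gap, $q_k(E_f)\neq 0$ for every real $k$.

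The key identity is the standard spectral formula for a simple eigenvalue:
\begin{equation}
P_f(k)=\frac{\mathrm{adj}\,(E_f-H(k))}{q_k(E_f)} .
\end{equation}
To verify it, note that $(E_f-H)\,\mathrm{adj}(E_f-H)=\det(E_f-H)=0$, so the columns of the adjugate lie in the one-dimensional kernel. The adjugate therefore has the form $c\,|v_0\rangle\langle v_0|$. Its trace equals $\partial_E\chi_k(E)|_{E_f}=q_k(E_f)$, which fixes $c$ and gives the formula.

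Degree counting then proceeds in three steps.
\begin{enumerate}
\item Each entry of the adjugate is an $(n-1)\times(n-1)$ minor of $E_f-H(k)$. It is therefore a sum of products of $n-1$ trigonometric polynomials of degree $\le m$, so it has degree $\le (n-1)m=N$.
\item The naive bound on the denominator is worse: $\chi_k(E)$ is a polynomial in $E$ whose coefficients $c_j(k)$ have degree $\le jm$. Since $\chi_k(E_f)=0$ for all $k$, polynomial division gives $q_k(E)=\sum_{r=0}^{n-1}E^{r}\sum_{j\ge r+1}c_{n-j}(k)E_f^{\,j-r-1}$, so this only bounds the degree of $q_k(E_f)$ by $nm$.
\item The sharper bound comes from the trace identity above: $q_k(E_f)=\mathrm{Tr}\,\mathrm{adj}(E_f-H(k))$. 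This is a sum of $(n-1)\times(n-1)$ principal minors, so its degree is $\le (n-1)m=N$.
\end{enumerate}
Both numerator and denominator are therefore trigonometric polynomials of degree at most $N$. Equivalently, after substituting $z=e^{ik}$ and multiplying through by $z^N$, they become algebraic polynomials of degree $\le 2N$.

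This form feeds directly into Lemma~\ref{lemma:Bernstein}. A nonvanishing trigonometric polynomial of degree $N$ in the denominator has exactly $2N$ complex zeros $a_\ell$ per period, counted with multiplicity. Writing it as $\mathrm{const}\cdot\prod_{\ell=1}^{2N}\sin((k-a_\ell)/2)$ puts each matrix element in the class $\mathcal{T}_N^c(a_1,\dots,a_{2N},K)$, up to a unimodular phase factor that needs to be checked. No $a_\ell$ is real, because $q_k(E_f)\neq0$ on the real axis by the gap assumption. These $a_\ell$ are exactly the poles $k^{\pm}_{f,\ell}$ used in the Green-function analysis.

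The main obstacle I expect is the degree bound on $q_k(E_f)$. The trace-of-adjugate identity settles it cleanly, but only if the flat eigenvalue is nondegenerate. If degeneracies with other bands occur at complex $k$, or the adjugate entries share common factors with the denominator, the true degree can drop below $N$. The statement should then be read as "order at most $N$". A secondary point is the unimodular prefactor $e^{iNk}$ relating $\prod\sin$ to a trigonometric polynomial, which I would absorb into the definition of $t$.
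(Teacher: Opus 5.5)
Your proof is correct and is essentially the paper's argument: the paper sets $E_f=0$ and uses $\mathrm{adj}(H)=U\,\mathrm{adj}(D)\,U^{-1}=(\prod_{i\neq1}\lambda_i)\,|v_0\rangle\langle v_0|$ to get $P=\mathrm{adj}(H)/\mathrm{tr}\,\mathrm{adj}(H)$, then bounds numerator and denominator as $(n-1)\times(n-1)$ minors of degree at most $(n-1)m$, exactly as in your steps 1 and 3. Two side remarks are off, though harmlessly: step 2 is wrongly pessimistic, because your division formula only involves $c_{n-j}$ with $j\ge1$ and so already gives degree at most $(n-1)m$ (equivalently, the degree-$nm$ term $\det(-H)$ drops out of $q_k(E_f)=\chi_k'(E_f)$); and the unimodular prefactor you worry about does not arise, since $\prod_{\ell=1}^{2N}\sin((k-a_\ell)/2)$ is itself a degree-$N$ trigonometric polynomial.
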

\begin{proof}
    The Hamiltonian of this system is a $n\times n$ Hermitian matrix $H(k),k\in K:=\mathbb{R}(\mathrm{mod}\  2\pi)$, with matrix elements $H_{ij}(k)\in \mathcal{T}^c_m(k):=\{\sum_{l=-m}^{m}a_le^{ilk},a_l\in \mathbb{R}\}$. The Hamiltonian can be diagonalized 
    \begin{equation}
        H= U D U^{-1},\ D =\mathrm{diag}(\lambda_1=0,\lambda_2,...,\lambda_n).
    \end{equation}
    Here $\lambda_1=0$ as this system has a single flat band. Consider the adjugate matrix $\mathrm{adj}(H)$, we have
    \begin{equation}
        \mathrm{adj}(H)= \mathrm{adj}(U D U^{-1})=U \mathrm{adj}(D) U^{-1}=(\prod_{i\neq 1}\lambda_i)|v_0\rangle\langle v_0|,
    \end{equation}
    Therefore, 
    \begin{equation}
        P=\frac{\mathrm{adj}(H)}{\mathrm{tr}(\mathrm{adj}(H))}.
    \end{equation}
    One the other hand, $\mathrm{adj}(H)_{ij}=(-1)^{i+j}\mathrm{det}(H^{(i|j)})$, we have $\mathrm{adj}(H)_{ij}\in \mathcal{T}^c_{N}, N:=m(n-1)$. Therefore, the numerator and denominator of $P_f$ both belong to $\mathcal{T}^c_{N}$. 
    As the denominator $\prod_{i\neq 1}\lambda_i$ is always real for any real value of $k$, its zeros occur symmetrically above and below the real axis, allowing us to take $y_\ell>0$, $a_\ell=x_\ell+iy_\ell$, $a_{N+\ell}=x_\ell-iy_\ell$ for $\ell=1,2,\cdots, N$, and$P_{i,j}\in\mathcal{T}_n^c(a_1,\cdots,a_{2n},K)$.
\end{proof}

\begin{theorem}
    Given the quantum metric $\mathcal{G}(k)=\frac{1}{2}\mathrm{Tr}{[(\partial_k P_f)^2]}$, the quantum metric length is defined by $\ell_{QM}=\int \frac{dk}{2\pi}\mathcal{G}(k)$. It satisfies 
    \begin{equation}
        \ell_{QM}\le\frac{1}{4}\sum_{i=1}^{N}\sum_{j=1}^{N}P_{y_i+y_j}(x_i-x_j)\le\frac{N^2}{4}\coth{y_{\mathrm{min}}},
    \end{equation}
    where $x_\ell+iy_\ell$ are the poles of the flat band projection operator $P$ in the upper half plane $y_\ell>0$, and $y_{\mathrm{min}}$ is the imaginary part of the pole that is the closest to the real axis.
    \label{theorem:qml_bound}
\end{theorem}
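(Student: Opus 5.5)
The plan is to bound the quantum metric pointwise by the Bernstein factor of Lemma~\ref{lemma:Bernstein}, and then integrate over the Brillouin zone using the semigroup property of the Poisson kernel.

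\emph{Step 1: reduce $\mathcal{G}$ to an operator norm.} Write $P=|u\rangle\langle u|$ with $\langle u|u\rangle=1$. Differentiating $P^2=P$ shows that $\partial_kP=|w\rangle\langle u|+|u\rangle\langle w|$, where $|w\rangle=(1-P)|\partial_k u\rangle\perp|u\rangle$. Hence $\partial_kP$ is a Hermitian rank-two operator with eigenvalues $\pm\|w\|$. It follows that
\[
\mathcal{G}(k)=\tfrac12\mathrm{Tr}[(\partial_kP)^2]=\|w\|^2=\|\partial_kP\|_{\mathrm{op}}^2=\sup_{\|a\|=1}|\langle a|\partial_kP|a\rangle|^2 .
\]

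\emph{Step 2: apply Lemma~\ref{lemma:Bernstein} to scalar test functions.} For a fixed unit vector $a$, set $f_a(k)=\langle a|P(k)|a\rangle-\tfrac12$. Since $0\le\langle a|P|a\rangle\le1$, we have $\|f_a\|_K\le\tfrac12$, and $f_a'=\langle a|\partial_kP|a\rangle$.

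By Lemma~\ref{lemma:projection_poles}, $P=\mathrm{adj}(H)/\mathrm{tr}\,\mathrm{adj}(H)$, with both numerator and denominator in $\mathcal{T}^c_N$. The isolating gap guarantees that $D(k):=\mathrm{tr}\,\mathrm{adj}(H)=\prod_{i\neq1}\lambda_i$ has no real zeros. It therefore has constant sign, and we may take it positive after an overall sign. This means $f_a$, including the constant $-\tfrac12=-\tfrac12 D/D$, is a ratio of an element of $\mathcal{T}^c_N$ by $D$.

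The step I expect to be the main technical point is to rewrite $D$ in the exact form required by the lemma. A real trigonometric polynomial of degree $N$ has $2N$ zeros in a period strip, and they come in conjugate pairs $a_\ell,\bar a_\ell$. Using $\sin((\theta-a)/2)\sin((\theta-\bar a)/2)=\tfrac12(\cos(\theta-x)-\cosh y)$, one gets $D(\theta)=c\prod_{\ell=1}^{2N}|\sin((\theta-a_\ell)/2)|$ for a constant $c>0$. This is a Fejér--Riesz-type factorization, and I would verify it directly by matching degrees and leading coefficients. The zeros of $D$ contain the poles of $P$; any extra zeros can be included without harm. With this factorization, $f_a\in\mathcal{T}^c_N(a_1,\dots,a_{2N},K)$.

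Because the poles come in conjugate pairs and $P_{-y}=-P_y$, the two sums in $B_N$ coincide. This gives $B_N(\theta)=\sum_{\ell=1}^N P_{y_\ell}(\theta-x_\ell)$. Lemma~\ref{lemma:Bernstein} then yields $|f_a'(\theta)|\le\tfrac12B_N(\theta)$ uniformly in $a$. Combined with Step 1,
\[
\mathcal{G}(\theta)\le\tfrac14B_N(\theta)^2 .
\]

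\emph{Step 3: integrate.} Expanding the square,
\[
\ell_{QM}\le\frac14\sum_{i,j=1}^N\int_0^{2\pi}\frac{d\theta}{2\pi}P_{y_i}(\theta-x_i)P_{y_j}(\theta-x_j).
\]
With $r=e^{-y}$, one has $P_y(x)=\dfrac{1-r^2}{1-2r\cos x+r^2}=\sum_{n\in\mathbb{Z}}r^{|n|}e^{inx}$. The Fourier coefficients multiply under convolution, so
\[
\int\frac{d\theta}{2\pi}P_{y_i}(\theta-x_i)P_{y_j}(\theta-x_j)=P_{y_i+y_j}(x_i-x_j).
\]
This gives the first inequality.

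For the second inequality, $P_s(x)\le\dfrac{\sinh s}{\cosh s-1}=\coth(s/2)$. Since $s=y_i+y_j\ge2y_{\min}$ and $\coth$ is decreasing, each of the $N^2$ terms is at most $\coth y_{\min}$, which yields $\ell_{QM}\le\frac{N^2}{4}\coth y_{\min}$.
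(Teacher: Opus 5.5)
Your proposal is correct and is essentially the paper's proof: a Bernstein-type bound on scalar matrix elements of $P$ gives $\|\partial_kP\|\le\frac12B_N$, hence $\mathcal{G}\le\frac14B_N^2$, which is integrated via the Poisson-kernel semigroup identity and bounded termwise by $\coth y_{\mathrm{min}}$. The differences are cosmetic. You test with $\langle a|P|a\rangle-\frac12$, using that a Hermitian matrix's norm equals its numerical radius, and the exact identity $\mathcal{G}=\|\partial_kP\|^2$, where the paper uses $u^\dagger(2P-I)v$ and a rank-two argument; you also spell out the factorization of $D$, the Fourier proof of the semigroup identity and $P_s\le\coth(s/2)$, which the paper only asserts (and your product $\sin\frac{\theta-a}{2}\sin\frac{\theta-\bar a}{2}$ should equal $\frac12(\cosh y-\cos(\theta-x))$, a harmless sign slip).
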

\begin{proof}
    Consider the reflaction operator $R(k)=2P(k)-I$, we define scalar function $$f_{u,v}(k)=u^\dagger R(k)v,$$ where $u$ and $v$ be arbitrary unit vectors in $\mathbb{C}^n$. According to Lemma~\ref{lemma:projection_poles}, every matrix element of the flat band projection operator $P_{ij}(k)\in\mathcal{T}_n^c(a_1,\cdots,a_{2n},K)$, where $a_\ell=x_\ell+iy_\ell$ are the poles defined in Lemma~\ref{lemma:projection_poles}. Thus  $f_{u,v}(k)\in\mathcal{T}_n^c(a_1,\cdots,a_{2n},K)$. 
    As a reflection operator, $R(k)$ has a operator norm $||R(k)||=1$. Thus for any unit vectors $u$ and $v$, $||f_{u,v}(k)||_{K}\le1$.
    Applying Lemma~\ref{lemma:Bernstein}, the derivative of this function is bounded pointwise by 
    \begin{equation}
        |f'_{u,v}(k)|\le B_N(k)||f_{u,v}(k)||_K\le B_N(k)=\sum_{\ell=1}^{N}P_{y_\ell}(k-x_\ell).
    \end{equation}
    This implies that the operator norm of the derivative matrix satisfies $||R'(k)||\le B_N(k)$. Since $R=2P-I$, we have $R'=2P'$, and
    \begin{equation}
        ||P'||\le\frac{1}{2}B_N(k)
    \end{equation}
    Thus, we get the pointwise upper bound on the quantum metric
    \begin{equation}
        \mathcal{G}(k)\le\frac{1}{2}\mathrm{rank}[(P')^2]||(P')^2||\le||(P')^2||\le\frac14B_N^2(k),
    \end{equation}
    since $\mathrm{rank}[(P')^2]\le2$.
    
    Notice that the Bernstein factor $B_N(k)$ evaluates to a sum of Poisson kernels $P_y(x)={\sinh{y}}/\left({\cosh{y}-\cos{x}}\right)$, which satisfies
    \begin{equation*}
            \frac{1}{2\pi}\int_0^{2\pi}dxP_{y_1}(x-x_1)P_{y_2}(x-x_2)=P_{y_1+y_2}(x_1-x_2).
    \end{equation*}
    Then the upper bound of the quantum metric length is
    \begin{equation}
    \begin{split}
        \int\frac{dk}{2\pi}\mathcal{G}(k)&\le\frac{1}{8\pi}\int dk \sum_{i=1}^N\sum_{j=1}^N P_{y_{i}}(k-x_{i})P_{y_{j}}(k-x_{j})\\
        &=\frac{1}{4}\sum_{i=1}^N\sum_{j=1}^NP_{y_{i}+y_{j}}(x_{i}-x_{j})\\
        &\le\frac{N^2}{4}\coth{y_\mathrm{min}}.
    \end{split}
    \end{equation}
\end{proof}

The equality in Lemma~\ref{lemma:Bernstein} holds if the rational trigonometric polynomial $f$ takes the form of a degree-$N$ Blaschke product
\begin{equation}
    S_N(z)=\prod_{\ell=1}^N\frac{e^{iz}-e^{-y_\ell+ix_\ell}}{1-e^{-y_\ell-ix_\ell}\cdot e^{iz}}.
\end{equation}

We now construct a 2-band, hopping-range-1 Hamiltonian with $N=(n-1)m=1$ that hosts an isolated flat band at $E=0$ and holds the equality in Theorem~\ref{theorem:qml_bound}. Picking $S_1(k)=\left({e^{iz}-e^{-y_0}}\right)/\left({1-e^{-y_0}\cdot e^{iz}}\right)$, we embed the Blaschke product into the 2-band projection operator 
\begin{equation}
    P_f(k)=\frac{1}{2}\left(\begin{array}{cc}
         1&S_1(k)  \\
         \overline{S_1(k)}&1 
    \end{array}\right).
\end{equation}
By construction, $|S_1(k)|^2=1$ on the real axis, so $P_f^2=P_f$ and $\mathrm{tr}(P_f)=1$. Its derivative $P_f'$ is fully ranked and has a pair of eigenvalues $\pm S_1'(k)=\pm B_1(k)$ on the real axis, saturating the pointwise bound from the Bernstein-type inequality.

The range-1 Hamiltonian $H_1(k)$ isolates the flat band at $E_f=0$, with a dispersive band at $E_1(k)>0$. It can be decomposed into
\begin{equation}
    H_1(k)=E_1(k)(I-P_f(k))=\frac{E_1(k)}{2}\left(\begin{array}{cc}
         1&-S_1(k)  \\
         -\overline{S_1(k)}&1 
    \end{array}\right)
\end{equation}
To ensure the hopping range is at most $m=1$, the matrix elements of $H_1(k)$ can only contain the phase factors $e^{ik}$, $1$, and $e^{-ik}$. To achieve this, we must choose $E_1(k)$ to  cancel the denominators of $S_1(k)$ and $\overline{S_1(k)}$. A natural choice is $E_1(k)=C|1-e^{-y_0}|^2=2Ce^{-y_0}(\cosh(y_0)-\cos(k))$, where $C>0$ is an energy scale. The Hamiltonian now reads
\begin{equation}
\begin{split}
    &\quad H_1(k)=C\left(\begin{array}{cc}
         \cos k-\cosh y_0&1-\cos(k-iy_0) \\
         1-\cos(k+iy_0)&\cos k-\cosh y_0
    \end{array}\right).
\end{split}
\end{equation}
This Hamiltonian has the following properties:
\begin{itemize}
    \item Every matrix element in $H_1(k)$ is linear combination of $e^{ik}$, $1$, and $e^{-ik}$.
    \item The eigenvalues of $H_1(k)$ are $E_f=0$ (the flat band) and $E_1(k)=2C(\cosh(y_0)-\cos(k))$ (the dispersive band ). 
    \item The location of the poles saturates the limit allowed by the geometry of the space. The integrated quantum metric evaluates to $\ell_{QM}=\frac{1}{4}\coth y_0$.
\end{itemize}

Following the same protocol, we can further construct a 2-band, range-2 Hamiltonian with $N=(n-1)m=2$ that hosts an isolated flat band at $E=0$ and holds the equality in Theorem~\ref{theorem:qml_bound}. The Hamiltonian now reads
\begin{equation}
\begin{split}
    &\quad H_2(k)=C\left(\begin{array}{cc}
         (\cos k-\cosh y_0)^2&(1-\cos(k-iy_0))^2 \\
         (1-\cos(k+iy_0))^2&(\cos k-\cosh y_0)^2
    \end{array}\right).
\end{split}
\end{equation}
This Hamiltonian has eigenvalues $E_f=0$ (the flat band) and $E_1(k)=2C(\cosh(y_0)-\cos(k))^2$ (the dispersive band ). The integrated quantum metric evaluates to $\ell_{QM}=\coth y_0$.

\end{document}